\tikzstyle{internal} = [draw, fill, shape=circle]
\tikzstyle{external} = [shape=circle]
\tikzstyle{square}   = [draw, fill, rectangle]
\tikzstyle{triangle} = [draw, fill, regular polygon, regular polygon sides=3, inner sep=3pt]
\tikzstyle{pentagon} = [draw, fill, regular polygon, regular polygon sides=5, inner sep=2pt, minimum size=14pt]
\tikzset{every fit/.append style=text badly centered}
\tikzset{>=latex} 
\def\*#1{\mathbf{#1}}
\def\+#1{\mathcal{#1}}
\def\-#1{\mathrm{#1}}
\def\=#1{\mathbb{#1}}
\newcommand{\NP}{{\mathbf{NP}}}
\newcommand{\wt}{\mathrm{wt}}
\newtheorem{theorem}{Theorem}
\newtheorem{lemma}[theorem]{Lemma}
\crefname{theorem}{Theorem}{Theorems}
\crefname{observation}{Observation}{Observations}
\crefname{claim}{Claim}{Claims}
\crefname{condition}{Condition}{Conditions}
\crefname{algorithm}{Algorithm}{Algorithms}
\crefname{property}{Property}{Properties}
\crefname{example}{Example}{Examples}
\crefname{fact}{Fact}{Facts}
\crefname{lemma}{Lemma}{Lemmas}
\crefname{corollary}{Corollary}{Corollaries}
\crefname{definition}{Definition}{Definitions}
\crefname{remark}{Remark}{Remarks}
\crefname{proposition}{Proposition}{Propositions}
\crefname{equation}{equation}{equations}
\crefname{enumi}{Case}{Case}
\def\prob#1#2#3{\goodbreak\begin{list}{}{\labelwidth\z@ \itemindent-\leftmargin
      \itemsep\z@  \topsep6\p@\@plus6\p@
      \let\makelabel\descriptionlabel}
  \item[\textbf{Name}]#1
  \item[\textbf{Instance}]#2
  \item[\textbf{Output}]#3
  \end{list}}
\providecommand\@dotsep{5}
\def\listtodoname{Todo list}
\def\listoftodos{\@starttoc{tdo}\listtodoname}
\title{Inapproximability of Counting Independent Sets in Linear Hypergraphs}
\author{Guoliang Qiu, Jiaheng Wang}
\address[Guoliang Qiu]{John Hopcroft Center for Computer Science, Shanghai Jiao Tong University, 800 Dongchuan Road, Minhang District, Shanghai, China. \textnormal{E-mail: \url{guoliang.qiu@sjtu.edu.cn}}}
\address[Jiaheng Wang]{School of Informatics, University of Edinburgh, Informatics Forum, Edinburgh, EH8 9AB, United Kingdom. \textnormal{E-mail: \url{jiaheng.wang@ed.ac.uk}}}
\begin{document}

\begin{abstract}
It is shown in this note that approximating the number of independent sets in a $k$-uniform linear hypergraph with maximum degree at most $\Delta$ is $\NP$-hard if $\Delta\geq 5\cdot 2^{k-1}+1$. 
This confirms that for the relevant sampling and approximate counting problems, the regimes on the maximum degree where the state-of-the-art algorithms work are tight, up to some small factors. 
These algorithms include: the approximate sampler and randomised approximation scheme by Hermon, Sly and Zhang (RSA, 2019), the perfect sampler by Qiu, Wang and Zhang (ICALP, 2022), and the deterministic approximation scheme by Feng, Guo, Wang, Wang and Yin (FOCS, 2023). 
\end{abstract}

\maketitle

\section{Introduction}


This note is concerned with the problem of counting independent sets in hypergraphs. 
We start with the basic definitions. 
A \emph{hypergraph} $H=(W,\mathcal{E})$ is specified by a set of vertices $W$ and a set of hyperedges $\+E$, where each hyperedge $e\in\mathcal{E}$ is a subset of $W$. 
It is said to be $k$-\emph{uniform}, if each hyperedge contains exactly $k$ vertices. 
The degree of a vertex is the number of hyperedges in which it appears, 
and the \emph{degree} $\Delta$ of the hypergraph is the maximum degree of its vertices. 
A set $I\subseteq W$ is a (weak) \emph{independent set} if $I\cap e\neq e$ holds for all $e\in\mathcal{E}$. 
We remark that this problem is parameterised by $k$ and $\Delta$. 

This problem was first studied by Bordewich, Dyer and Karpinski \cite{BDK06,BDK08}, where they showed that the straight-forward Markov chain over all independent sets is rapid mixing when $\Delta\leq k-2$. 
This also yields a fully-polynomial randomised approximation scheme (FPRAS) for counting the number of independent sets by the standard counting-to-sampling reductions. For more information on such reductions, see for example \cite{JVV86} or \cite{SVV09}.
The regime on the maximum degree was further improved by Hermon, Sly and Zhang \cite{HSZ19} to $\Delta\leq c\cdot 2^{k/2}$ for some absolute constant $c>0$. 
A later work by Qiu, Wang and Zhang \cite{qiu2022perfect} provided a perfect sampler (i.e., the output distribution is unbiased) which runs in expected polynomial time when $\Delta\leq c\cdot 2^{k/2}/k$ for some absolute constant $c>0$. 
Very recently, Feng, Guo, Wang, Wang and Yin \cite{FGWWY22} further derandomised the Markov chain Monte Carlo approach and provided a fully-polynomial deterministic approximation scheme (FPTAS) when $\Delta\leq c\cdot 2^{k/2}/k^2$ for some absolute constant $c>0$. 
On the other hand, Bez\'{a}kov\'{a}, Galanis, Goldberg, Guo and \v{S}tefankovi\v{c} \cite{BGGGS19} proved that approximating the number of independent sets is intractable when $\Delta\geq 5\cdot 2^{k/2}$ unless $\mathbf{NP}=\mathbf{RP}$. 
These results established a sharp \emph{computational phase transition} for this problem.

The notion of \emph{linear} hypergraphs (aka. \emph{simple} hypergraphs) has also attracted some attention. 
A hypergraph is considered linear if the intersection of any two hyperedges contains at most one vertex. 
When the input hypergraph is restricted to be linear, the tractability regime of the aforementioned algorithms can be extended. 
Specifically, the same work mentioned above also established the following regimes: $\Delta\leq c\cdot 2^{k}/k^2$ for both the FPRAS and the perfect sampler, and $\Delta\leq 2^{(1-o(1))k}$ for the FPTAS.
Under this setting, can we also establish a computational phase transition up to some low-order factors? 
As the main claim of this note, we answer the question affirmatively.
Our result holds for a larger class of hypergraphs with overlap $b$, where the intersection of any two hyperedges contains at most $b$ vertices. 
\begin{theorem} \label{thm:main}
  For any $k\geq 2$, $1\leq b\leq k/2$ and $\Delta\geq 5\cdot 2^{k-b}+1$, it is $\NP$-hard to approximate the number of independent sets in $k$-uniform hypergraphs with maximum degree at most $\Delta$ and overlap at most $b$. 
\end{theorem}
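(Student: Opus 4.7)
The plan is to derive Theorem~\ref{thm:main} via a polynomial-time reduction from the hardness result of \cite{BGGGS19}, which states that approximating $\#\mathrm{IS}$ in $k_0$-uniform hypergraphs of maximum degree $\Delta_0\geq 5\cdot 2^{k_0/2}$ is $\NP$-hard. Setting $k_0\defeq 2(k-b)$ aligns the thresholds, since then $5\cdot 2^{k_0/2}=5\cdot 2^{k-b}$. What remains is to transform a $k_0$-uniform hard instance into a $k$-uniform instance of overlap at most $b$ without destroying the approximation gap.

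I propose a \emph{split-and-bridge} construction. Taking as input a $k_0$-uniform hypergraph $H_0$ from the BGGGS hardness reduction (which we arrange to be linear), I build a $k$-uniform hypergraph $H$ as follows: for each $e=\{v_1,\dots,v_{k_0}\}\in E(H_0)$, partition $e$ into two halves $A_e,B_e$ of size $k-b$, introduce a set $U_e$ of $b$ fresh bridge vertices specific to $e$, and add two $k$-uniform hyperedges $A_e\cup U_e$ and $B_e\cup U_e$ to $H$. A direct check then confirms: $H$ is $k$-uniform; the two new edges from a single $e$ overlap exactly in $U_e$, of size $b$; overlaps between gadgets from distinct $e,f\in E(H_0)$ are contained in $e\cap f$, hence at most $1$ by linearity of $H_0$; bridge vertices have degree $2$, and each $v\in V(H_0)$ retains its $H_0$-degree, so $\Delta(H)\leq\max(\Delta_0,2)\leq 5\cdot 2^{k-b}+1$.

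The main analytic step is to relate $\#\mathrm{IS}(H)$ to $\#\mathrm{IS}(H_0)$. A direct counting yields
\[
\#\mathrm{IS}(H) \;=\; (2^b)^{|E(H_0)|} \sum_{I\subseteq V(H_0)} (1-2^{-b})^{N(I)},
\]
where $N(I)$ counts the edges $e\in E(H_0)$ for which $A_e\subseteq I$ or $B_e\subseteq I$. The key observation is that any $I$ that is not independent in $H_0$ contains some whole edge $e\subseteq I$, forcing $A_e,B_e\subseteq I$ and hence $N(I)\geq 1$, so each non-independent contribution is damped by at least $1-2^{-b}$. I expect the main obstacle to be converting this observation into a quantitative gap-preserving statement: the number of non-independent $I$ with small $N(I)$ may be exponential, so the crude damping estimate is insufficient, and one must lean on the explicit exponential-factor gap between YES and NO instances afforded by the BGGGS construction to show that an approximation of $\#\mathrm{IS}(H)$ still distinguishes the two instance classes, thereby transferring the $\NP$-hardness to the overlap-$b$ regime.
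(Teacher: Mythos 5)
The proposal has a fatal structural flaw at the very first step. You write that you take ``a $k_0$-uniform hypergraph $H_0$ from the BGGGS hardness reduction (which we arrange to be linear)''. But the BGGGS hard instances are not linear and cannot be ``arranged'' to be so: the BGGGS reduction replaces each vertex of a graph $G$ by $k_0/2$ copies, so any two hyperedges corresponding to $G$-edges sharing a vertex overlap in exactly $k_0/2$ vertices. With your choice $k_0=2(k-b)$, that overlap is $k-b\geq k/2\geq b$. Obtaining a hard \emph{linear} (or small-overlap) family of instances is precisely the content of \Cref{thm:main}; assuming it as input renders the argument circular.

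Even if you unroll BGGGS and apply your split-and-bridge directly to the underlying graph $G$, the overlap bound still fails in the regime $b<k/2$. Each ``half'' $A_e,B_e$ consists of the $k-b$ copies of an endpoint of $e$, and the bridge set $U_e$ has size $b$. If $e=(u,v)$ and $f=(v,w)$ share $v$, then $B_e$ and $A_f$ are the \emph{same} block of $k-b$ copies of $v$, so the hyperedges $B_e\cup U_e$ and $A_f\cup U_f$ overlap in $k-b>b$ vertices. There is no assignment of the remaining parameters that fixes this without destroying either $k$-uniformity or the degree bound: the two hyperedges generated by a single $e$ must share all of $U_e$, so $|U_e|\leq b$, while uniformity forces $|A_e|=k-|U_e|\geq k-b$, and that $|A_e|$-sized block must be shared across $G$-edges incident to the same $G$-vertex.

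Your counting identity $\#\mathrm{IS}(H)=(2^b)^{|E(H_0)|}\sum_{I}(1-2^{-b})^{N(I)}$ is itself correct for the gadget as described, and the honest acknowledgment that the damping is too crude is well placed. But notice that the quantity on the right is a genuinely different weighted partition function, not a multiplicative perturbation of $\#\mathrm{IS}(H_0)$, so ``transferring the NP-hardness'' cannot be done by a gap argument alone --- one must identify this weighted sum with a model whose hardness is known and whose parameters are in a provably hard regime. That is exactly what the paper does, but with a different and simpler construction: starting from a $\Delta$-regular graph $G$, replace each vertex by only $b$ copies and pad each $G$-edge with $k-2b$ fresh degree-one vertices (one hyperedge per $G$-edge, no splitting). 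This gives overlap exactly $b$ for free, and the same style of local counting yields $|\mathcal{I}(H_G)|=2^{|E|(k-2b)}Z(G)$ for an explicit antiferromagnetic $2$-spin system with $\beta=1$, $\gamma=1-2^{-(k-2b)}$, $\lambda=2^b-1$. Hardness then follows from the Sly--Sun theorem once one verifies non-uniqueness at $\Delta\geq 5\cdot 2^{k-b}+1$, which is the analytic content of \Cref{lem:analytic}. In short: you need to reduce from a $2$-spin system whose non-uniqueness you can verify, not from BGGGS instances; and the gadget must not re-use a large block of copies across hyperedges.
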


The hardness for the linear hypergraph can be obtained by setting $b=1$ in the theorem. Additionally, it subsumes the general case presented in \cite{BGGGS19} by setting $b:=\lfloor k/2\rfloor$. 


In the general case \cite{BGGGS19}, the hardness can be shown by reducing from the hard-core model (counting weighted independent sets) on \emph{graphs}, where each vertex in the graph is replaced by $k/2$ copies in the resulting hypergraph, and each edge in the graph is associated with the hyperedge containing these copies. However, this approach naturally requires large overlaps in the resulting hypergraph.
In our case, we ensure linearity (or small overlaps) by creating $b$ copies for each vertex in the graph, and then filling up each hyperedge to $k$ vertices. 
This boils down to a general anti-ferromagnetic $2$-spin system, instead of the hard-core model.
The main complication is to establish the so-called ``non-uniqueness'' property, which allows us to utilize a classical result by Sly and Sun~\cite{SS14} to derive inapproximability. 

An open problem is to determine the exact computational phase transition for the linear case, as there remains a $\Theta(k^2)$ gap in-between.
Notably, for independent sets on the infinite linear hypertree, the uniqueness of the Gibbs distribution holds when $\Delta\leq \left(1+o(1)\right)2^{k}/k$~\cite{BGGGS19}.
It is still unclear which of these three thresholds, if any, would be the ground truth for the computational phase transition point. 

The phenomenon of a more relaxed algorithmic regime (and thus a more restricted hardness regime) existing for linear hypergraphs is also observed in the context of hypergraph $q$-colouring problem. Currently, the best known algorithmic bound in the general hypergraph is $\Delta\lesssim q^{k/3}$ as established in \cite{JPV21,HSW21,FGWWY22}.\footnote{The notations $\lesssim$ and $\gtrsim$ hide low-order factors.}
In the case of linear hypergraphs, the bound can be improved to $q^{k/2-o(k)}$, as shown in \cite{FGW22a}.
From the hardness side, the approximate counting problem is known to be $\NP$-hard when $\Delta\geq 5\cdot q^{k/2}$ \cite{GGW22} in the general case when $q$ is even, but $\Delta\geq 2kq^k\log q+2q$ in the linear case \cite{GGW22}.

The problem of sampling hypergraph independent sets and $q$-colourings are both special cases of sampling solutions of constraint satisfaction problems (CSPs), and the computational results for these problems can be interpreted in the \emph{local lemma} regime. 
In this framework, each hyperedge is treated as a constraint associated with a ``bad event'' that this constraint fails when all vertices are independently assigned values uniformly at random. 
Suppose that each bad event occurs with probability at most $p$ and depends on at most $D$ other bad events. 
By noticing that $D=k\Delta-1$ in the hypergraph setting, the inapproximability conditions for counting both general hypergraph independent sets and colourings are unified as the form of $pD^2\gtrsim 1$. 
In contrast, the well-known Lov\'{a}sz local lemma \cite{EL75} provides a solution to \emph{any} CSP when $pD\lesssim 1$, and such a solution can be efficiently constructed using the Moser-Tardos Algorithm \cite{MT10}.
This confirms the presence of a quadratic gap between the complexity of approximate counting/sampling and searching. For general CSPs, the conjectured tractable regime is $pD^2\lesssim 1$, and there is a bunch of supporting evidence, see for example~\cite{WY23}. However, in light of the recent developments concerning linear hypergraphs, the tractable regime for CSPs with bounded overlap might extend further to $pD\lesssim 1$.

\section{Reduction from 2-spin systems}

Our reduction is based on the hardness of approximating the partition function of the \emph{$2$-spin system} on graphs.
A $2$-spin system on a graph $G=(V,E)$ is specified by an interaction matrix $\bm B$ and a vector ${\bm h}$ for the external field:
\begin{equation} \label{equ:2spin-interaction}
  {\bm B}=\begin{bmatrix}
    \beta & 1 \\ 1 & \gamma 
  \end{bmatrix}, \qquad 
  {\bm h}=\begin{bmatrix}
    \lambda \\ 1
  \end{bmatrix}, 
\end{equation}
where $\beta,\gamma,\lambda\geq 0$. 
The system is called \emph{anti-ferromagnetic}, if $\beta\gamma<1$. 
A configuration $\sigma:V\to\{0,1\}$ assigns each vertex $v\in V$ with a spin either $0$ or $1$. 
The \emph{weight} of a configuration $\sigma$ is defined by
\[
\wt(\sigma):=\lambda^{n_0(\sigma)}\beta^{m_{00}(\sigma)}\gamma^{m_{11}(\sigma)}
\]
where $n_0(\sigma)$ is the number of vertices assigned $0$ under $\sigma$, and $m_{00}(\sigma)$ (resp. $m_{11}(\sigma)$) is the number of edges whose both endpoints are assigned $0$ (resp. $1$) under $\sigma$.
The \emph{partition function} is defined by
\[
Z_{\beta,\gamma,\lambda}(G):=\sum_{\sigma}\wt(\sigma).
\]

The $2$-spin system we are interested in is specified by the following choices of parameters:
\begin{align}\label{eqn:parameters-specification}
    \beta=1, \qquad \gamma=1-\frac{1}{2^{k-2b}}, \qquad \lambda=2^b-1. 
\end{align}
The subscription in $Z_{\beta,\gamma,\lambda}$ is thus omitted as the parameters are now fixed. 

We now state the reduction. 
For any given $\Delta$-regular graph $G=(V,E)$, construct the hypergraph $H_G$ according to the following steps.  
\begin{itemize}
  \item[(T1)] Interpret the graph as a $2$-uniform hypergraph. 
  \item[(T2)] Replace each vertex with $b$ vertices. 
  \item[(T3)] For each hyperedge, insert another $k-2b$ vertices independently. 
\end{itemize}

Below is an example illustrating the reduction where $k=7$, $b=3$ and $\Delta=3$. 

\begin{center}

  \tikzset{every picture/.style={line width=0.70pt}} 

  \begin{tikzpicture}[x=0.70pt,y=0.70pt,yscale=-1,xscale=1,scale=0.9]
  
  \draw    (28,156) -- (108,156) ;
  \draw    (28,156) -- (68,124) ;
  \draw    (108,156) -- (68,124) ;
  \draw    (68,76) -- (28,156) ;
  \draw    (68,76) -- (108,156) ;
  \draw    (68,76) -- (68,124) ;
  \draw  [fill={rgb, 255:red, 0; green, 0; blue, 0 }  ,fill opacity=1 ] (24,156) .. controls (24,153.79) and (25.79,152) .. (28,152) .. controls (30.21,152) and (32,153.79) .. (32,156) .. controls (32,158.21) and (30.21,160) .. (28,160) .. controls (25.79,160) and (24,158.21) .. (24,156) -- cycle ;
  \draw  [fill={rgb, 255:red, 0; green, 0; blue, 0 }  ,fill opacity=1 ] (64,76) .. controls (64,73.79) and (65.79,72) .. (68,72) .. controls (70.21,72) and (72,73.79) .. (72,76) .. controls (72,78.21) and (70.21,80) .. (68,80) .. controls (65.79,80) and (64,78.21) .. (64,76) -- cycle ;
  \draw  [fill={rgb, 255:red, 0; green, 0; blue, 0 }  ,fill opacity=1 ] (64,124) .. controls (64,121.79) and (65.79,120) .. (68,120) .. controls (70.21,120) and (72,121.79) .. (72,124) .. controls (72,126.21) and (70.21,128) .. (68,128) .. controls (65.79,128) and (64,126.21) .. (64,124) -- cycle ;
  \draw  [fill={rgb, 255:red, 0; green, 0; blue, 0 }  ,fill opacity=1 ] (104,156) .. controls (104,153.79) and (105.79,152) .. (108,152) .. controls (110.21,152) and (112,153.79) .. (112,156) .. controls (112,158.21) and (110.21,160) .. (108,160) .. controls (105.79,160) and (104,158.21) .. (104,156) -- cycle ;
  
  \draw  [color={rgb, 255:red, 0; green, 0; blue, 0 }  ,draw opacity=1 ][fill={rgb, 255:red, 170; green, 170; blue, 170 }  ,fill opacity=1 ] (536.98,63.78) .. controls (536.98,61.57) and (538.77,59.78) .. (540.98,59.78) .. controls (543.19,59.78) and (544.98,61.57) .. (544.98,63.78) .. controls (544.98,65.99) and (543.19,67.78) .. (540.98,67.78) .. controls (538.77,67.78) and (536.98,65.99) .. (536.98,63.78) -- cycle ;
  \draw  [color={rgb, 255:red, 0; green, 0; blue, 0 }  ,draw opacity=1 ][fill={rgb, 255:red, 170; green, 170; blue, 170 }  ,fill opacity=1 ] (536.98,73.78) .. controls (536.98,71.57) and (538.77,69.78) .. (540.98,69.78) .. controls (543.19,69.78) and (544.98,71.57) .. (544.98,73.78) .. controls (544.98,75.99) and (543.19,77.78) .. (540.98,77.78) .. controls (538.77,77.78) and (536.98,75.99) .. (536.98,73.78) -- cycle ;
  \draw  [color={rgb, 255:red, 0; green, 0; blue, 0 }  ,draw opacity=1 ][fill={rgb, 255:red, 170; green, 170; blue, 170 }  ,fill opacity=1 ] (536.98,83.78) .. controls (536.98,81.57) and (538.77,79.78) .. (540.98,79.78) .. controls (543.19,79.78) and (544.98,81.57) .. (544.98,83.78) .. controls (544.98,85.99) and (543.19,87.78) .. (540.98,87.78) .. controls (538.77,87.78) and (536.98,85.99) .. (536.98,83.78) -- cycle ;
  \draw  [color={rgb, 255:red, 0; green, 0; blue, 0 }  ,draw opacity=1 ][fill={rgb, 255:red, 170; green, 170; blue, 170 }  ,fill opacity=1 ] (530.98,133.78) .. controls (530.98,131.57) and (532.77,129.78) .. (534.98,129.78) .. controls (537.19,129.78) and (538.98,131.57) .. (538.98,133.78) .. controls (538.98,135.99) and (537.19,137.78) .. (534.98,137.78) .. controls (532.77,137.78) and (530.98,135.99) .. (530.98,133.78) -- cycle ;
  \draw  [color={rgb, 255:red, 0; green, 0; blue, 0 }  ,draw opacity=1 ][fill={rgb, 255:red, 170; green, 170; blue, 170 }  ,fill opacity=1 ] (542.98,133.78) .. controls (542.98,131.57) and (544.77,129.78) .. (546.98,129.78) .. controls (549.19,129.78) and (550.98,131.57) .. (550.98,133.78) .. controls (550.98,135.99) and (549.19,137.78) .. (546.98,137.78) .. controls (544.77,137.78) and (542.98,135.99) .. (542.98,133.78) -- cycle ;
  \draw  [color={rgb, 255:red, 0; green, 0; blue, 0 }  ,draw opacity=1 ][fill={rgb, 255:red, 170; green, 170; blue, 170 }  ,fill opacity=1 ] (536.98,123.78) .. controls (536.98,121.57) and (538.77,119.78) .. (540.98,119.78) .. controls (543.19,119.78) and (544.98,121.57) .. (544.98,123.78) .. controls (544.98,125.99) and (543.19,127.78) .. (540.98,127.78) .. controls (538.77,127.78) and (536.98,125.99) .. (536.98,123.78) -- cycle ;
  \draw  [color={rgb, 255:red, 0; green, 0; blue, 0 }  ,draw opacity=1 ][fill={rgb, 255:red, 170; green, 170; blue, 170 }  ,fill opacity=1 ] (495.98,150.78) .. controls (495.98,148.57) and (497.77,146.78) .. (499.98,146.78) .. controls (502.19,146.78) and (503.98,148.57) .. (503.98,150.78) .. controls (503.98,152.99) and (502.19,154.78) .. (499.98,154.78) .. controls (497.77,154.78) and (495.98,152.99) .. (495.98,150.78) -- cycle ;
  \draw  [color={rgb, 255:red, 0; green, 0; blue, 0 }  ,draw opacity=1 ][fill={rgb, 255:red, 170; green, 170; blue, 170 }  ,fill opacity=1 ] (487.98,156.78) .. controls (487.98,154.57) and (489.77,152.78) .. (491.98,152.78) .. controls (494.19,152.78) and (495.98,154.57) .. (495.98,156.78) .. controls (495.98,158.99) and (494.19,160.78) .. (491.98,160.78) .. controls (489.77,160.78) and (487.98,158.99) .. (487.98,156.78) -- cycle ;
  \draw  [color={rgb, 255:red, 0; green, 0; blue, 0 }  ,draw opacity=1 ][fill={rgb, 255:red, 170; green, 170; blue, 170 }  ,fill opacity=1 ] (479.98,162.78) .. controls (479.98,160.57) and (481.77,158.78) .. (483.98,158.78) .. controls (486.19,158.78) and (487.98,160.57) .. (487.98,162.78) .. controls (487.98,164.99) and (486.19,166.78) .. (483.98,166.78) .. controls (481.77,166.78) and (479.98,164.99) .. (479.98,162.78) -- cycle ;
  \draw  [color={rgb, 255:red, 0; green, 0; blue, 0 }  ,draw opacity=1 ][fill={rgb, 255:red, 170; green, 170; blue, 170 }  ,fill opacity=1 ] (593.98,162.78) .. controls (593.98,160.57) and (595.77,158.78) .. (597.98,158.78) .. controls (600.19,158.78) and (601.98,160.57) .. (601.98,162.78) .. controls (601.98,164.99) and (600.19,166.78) .. (597.98,166.78) .. controls (595.77,166.78) and (593.98,164.99) .. (593.98,162.78) -- cycle ;
  \draw  [color={rgb, 255:red, 0; green, 0; blue, 0 }  ,draw opacity=1 ][fill={rgb, 255:red, 170; green, 170; blue, 170 }  ,fill opacity=1 ] (577.98,150.78) .. controls (577.98,148.57) and (579.77,146.78) .. (581.98,146.78) .. controls (584.19,146.78) and (585.98,148.57) .. (585.98,150.78) .. controls (585.98,152.99) and (584.19,154.78) .. (581.98,154.78) .. controls (579.77,154.78) and (577.98,152.99) .. (577.98,150.78) -- cycle ;
  \draw  [color={rgb, 255:red, 0; green, 0; blue, 0 }  ,draw opacity=1 ][fill={rgb, 255:red, 170; green, 170; blue, 170 }  ,fill opacity=1 ] (585.98,156.78) .. controls (585.98,154.57) and (587.77,152.78) .. (589.98,152.78) .. controls (592.19,152.78) and (593.98,154.57) .. (593.98,156.78) .. controls (593.98,158.99) and (592.19,160.78) .. (589.98,160.78) .. controls (587.77,160.78) and (585.98,158.99) .. (585.98,156.78) -- cycle ;
  \draw   (528.98,69.78) .. controls (528.98,63.15) and (534.35,57.78) .. (540.98,57.78) -- (540.98,57.78) .. controls (547.61,57.78) and (552.98,63.15) .. (552.98,69.78) -- (552.98,129.78) .. controls (552.98,136.41) and (547.61,141.78) .. (540.98,141.78) -- (540.98,141.78) .. controls (534.35,141.78) and (528.98,136.41) .. (528.98,129.78) -- cycle ;
  \draw   (535.18,119.39) .. controls (540.92,116.08) and (548.26,118.05) .. (551.57,123.79) -- (551.57,123.79) .. controls (554.88,129.53) and (552.92,136.87) .. (547.18,140.18) -- (495.22,170.18) .. controls (489.48,173.49) and (482.14,171.53) .. (478.82,165.79) -- (478.82,165.79) .. controls (475.51,160.05) and (477.48,152.71) .. (483.22,149.39) -- cycle ;
  \draw   (599.18,149.79) .. controls (604.92,153.11) and (606.88,160.45) .. (603.57,166.19) -- (603.57,166.19) .. controls (600.26,171.93) and (592.92,173.89) .. (587.18,170.58) -- (535.22,140.58) .. controls (529.48,137.27) and (527.51,129.93) .. (530.82,124.19) -- (530.82,124.19) .. controls (534.14,118.45) and (541.48,116.48) .. (547.22,119.79) -- cycle ;
  \draw   (472.98,160.78) .. controls (472.98,152.5) and (479.7,145.78) .. (487.98,145.78) -- (593.98,145.78) .. controls (602.27,145.78) and (608.98,152.5) .. (608.98,160.78) -- (608.98,160.78) .. controls (608.98,169.06) and (602.27,175.78) .. (593.98,175.78) -- (487.98,175.78) .. controls (479.7,175.78) and (472.98,169.06) .. (472.98,160.78) -- cycle ;
  \draw   (547.84,54.84) .. controls (554.38,58.61) and (556.62,66.98) .. (552.85,73.51) -- (498.52,167.61) .. controls (494.74,174.15) and (486.38,176.39) .. (479.84,172.62) -- (479.84,172.62) .. controls (473.3,168.84) and (471.06,160.48) .. (474.84,153.94) -- (529.16,59.84) .. controls (532.94,53.3) and (541.3,51.06) .. (547.84,54.84) -- cycle ;
  \draw   (602.75,172.83) .. controls (596,176.73) and (587.37,174.42) .. (583.48,167.67) -- (529.58,74.32) .. controls (525.69,67.58) and (528,58.95) .. (534.75,55.05) -- (534.75,55.05) .. controls (541.49,51.16) and (550.12,53.47) .. (554.01,60.22) -- (607.91,153.56) .. controls (611.8,160.31) and (609.49,168.94) .. (602.75,172.83) -- cycle ;
  \draw   (536.98,109.78) .. controls (536.98,107.57) and (538.77,105.78) .. (540.98,105.78) .. controls (543.19,105.78) and (544.98,107.57) .. (544.98,109.78) .. controls (544.98,111.99) and (543.19,113.78) .. (540.98,113.78) .. controls (538.77,113.78) and (536.98,111.99) .. (536.98,109.78) -- cycle ;
  \draw   (520.98,137.78) .. controls (520.98,135.57) and (522.77,133.78) .. (524.98,133.78) .. controls (527.19,133.78) and (528.98,135.57) .. (528.98,137.78) .. controls (528.98,139.99) and (527.19,141.78) .. (524.98,141.78) .. controls (522.77,141.78) and (520.98,139.99) .. (520.98,137.78) -- cycle ;
  \draw   (552.98,137.78) .. controls (552.98,135.57) and (554.77,133.78) .. (556.98,133.78) .. controls (559.19,133.78) and (560.98,135.57) .. (560.98,137.78) .. controls (560.98,139.99) and (559.19,141.78) .. (556.98,141.78) .. controls (554.77,141.78) and (552.98,139.99) .. (552.98,137.78) -- cycle ;
  \draw   (509.84,113.73) .. controls (509.84,111.52) and (511.63,109.73) .. (513.84,109.73) .. controls (516.05,109.73) and (517.84,111.52) .. (517.84,113.73) .. controls (517.84,115.94) and (516.05,117.73) .. (513.84,117.73) .. controls (511.63,117.73) and (509.84,115.94) .. (509.84,113.73) -- cycle ;
  \draw   (564.98,113.78) .. controls (564.98,111.57) and (566.77,109.78) .. (568.98,109.78) .. controls (571.19,109.78) and (572.98,111.57) .. (572.98,113.78) .. controls (572.98,115.99) and (571.19,117.78) .. (568.98,117.78) .. controls (566.77,117.78) and (564.98,115.99) .. (564.98,113.78) -- cycle ;
  \draw   (536.98,160.78) .. controls (536.98,158.57) and (538.77,156.78) .. (540.98,156.78) .. controls (543.19,156.78) and (544.98,158.57) .. (544.98,160.78) .. controls (544.98,162.99) and (543.19,164.78) .. (540.98,164.78) .. controls (538.77,164.78) and (536.98,162.99) .. (536.98,160.78) -- cycle ;
  
  \draw  [color={rgb, 255:red, 0; green, 0; blue, 0 }  ,draw opacity=1 ][fill={rgb, 255:red, 170; green, 170; blue, 170 }  ,fill opacity=1 ] (372.98,63.89) .. controls (372.98,61.68) and (374.77,59.89) .. (376.98,59.89) .. controls (379.19,59.89) and (380.98,61.68) .. (380.98,63.89) .. controls (380.98,66.1) and (379.19,67.89) .. (376.98,67.89) .. controls (374.77,67.89) and (372.98,66.1) .. (372.98,63.89) -- cycle ;
  \draw  [color={rgb, 255:red, 0; green, 0; blue, 0 }  ,draw opacity=1 ][fill={rgb, 255:red, 170; green, 170; blue, 170 }  ,fill opacity=1 ] (372.98,73.89) .. controls (372.98,71.68) and (374.77,69.89) .. (376.98,69.89) .. controls (379.19,69.89) and (380.98,71.68) .. (380.98,73.89) .. controls (380.98,76.1) and (379.19,77.89) .. (376.98,77.89) .. controls (374.77,77.89) and (372.98,76.1) .. (372.98,73.89) -- cycle ;
  \draw  [color={rgb, 255:red, 0; green, 0; blue, 0 }  ,draw opacity=1 ][fill={rgb, 255:red, 170; green, 170; blue, 170 }  ,fill opacity=1 ] (372.98,83.89) .. controls (372.98,81.68) and (374.77,79.89) .. (376.98,79.89) .. controls (379.19,79.89) and (380.98,81.68) .. (380.98,83.89) .. controls (380.98,86.1) and (379.19,87.89) .. (376.98,87.89) .. controls (374.77,87.89) and (372.98,86.1) .. (372.98,83.89) -- cycle ;
  \draw  [color={rgb, 255:red, 0; green, 0; blue, 0 }  ,draw opacity=1 ][fill={rgb, 255:red, 170; green, 170; blue, 170 }  ,fill opacity=1 ] (366.98,133.89) .. controls (366.98,131.68) and (368.77,129.89) .. (370.98,129.89) .. controls (373.19,129.89) and (374.98,131.68) .. (374.98,133.89) .. controls (374.98,136.1) and (373.19,137.89) .. (370.98,137.89) .. controls (368.77,137.89) and (366.98,136.1) .. (366.98,133.89) -- cycle ;
  \draw  [color={rgb, 255:red, 0; green, 0; blue, 0 }  ,draw opacity=1 ][fill={rgb, 255:red, 170; green, 170; blue, 170 }  ,fill opacity=1 ] (378.98,133.89) .. controls (378.98,131.68) and (380.77,129.89) .. (382.98,129.89) .. controls (385.19,129.89) and (386.98,131.68) .. (386.98,133.89) .. controls (386.98,136.1) and (385.19,137.89) .. (382.98,137.89) .. controls (380.77,137.89) and (378.98,136.1) .. (378.98,133.89) -- cycle ;
  \draw  [color={rgb, 255:red, 0; green, 0; blue, 0 }  ,draw opacity=1 ][fill={rgb, 255:red, 170; green, 170; blue, 170 }  ,fill opacity=1 ] (372.98,123.89) .. controls (372.98,121.68) and (374.77,119.89) .. (376.98,119.89) .. controls (379.19,119.89) and (380.98,121.68) .. (380.98,123.89) .. controls (380.98,126.1) and (379.19,127.89) .. (376.98,127.89) .. controls (374.77,127.89) and (372.98,126.1) .. (372.98,123.89) -- cycle ;
  \draw  [color={rgb, 255:red, 0; green, 0; blue, 0 }  ,draw opacity=1 ][fill={rgb, 255:red, 170; green, 170; blue, 170 }  ,fill opacity=1 ] (331.98,150.89) .. controls (331.98,148.68) and (333.77,146.89) .. (335.98,146.89) .. controls (338.19,146.89) and (339.98,148.68) .. (339.98,150.89) .. controls (339.98,153.1) and (338.19,154.89) .. (335.98,154.89) .. controls (333.77,154.89) and (331.98,153.1) .. (331.98,150.89) -- cycle ;
  \draw  [color={rgb, 255:red, 0; green, 0; blue, 0 }  ,draw opacity=1 ][fill={rgb, 255:red, 170; green, 170; blue, 170 }  ,fill opacity=1 ] (323.98,156.89) .. controls (323.98,154.68) and (325.77,152.89) .. (327.98,152.89) .. controls (330.19,152.89) and (331.98,154.68) .. (331.98,156.89) .. controls (331.98,159.1) and (330.19,160.89) .. (327.98,160.89) .. controls (325.77,160.89) and (323.98,159.1) .. (323.98,156.89) -- cycle ;
  \draw  [color={rgb, 255:red, 0; green, 0; blue, 0 }  ,draw opacity=1 ][fill={rgb, 255:red, 170; green, 170; blue, 170 }  ,fill opacity=1 ] (315.98,162.89) .. controls (315.98,160.68) and (317.77,158.89) .. (319.98,158.89) .. controls (322.19,158.89) and (323.98,160.68) .. (323.98,162.89) .. controls (323.98,165.1) and (322.19,166.89) .. (319.98,166.89) .. controls (317.77,166.89) and (315.98,165.1) .. (315.98,162.89) -- cycle ;
  \draw  [color={rgb, 255:red, 0; green, 0; blue, 0 }  ,draw opacity=1 ][fill={rgb, 255:red, 170; green, 170; blue, 170 }  ,fill opacity=1 ] (429.98,162.89) .. controls (429.98,160.68) and (431.77,158.89) .. (433.98,158.89) .. controls (436.19,158.89) and (437.98,160.68) .. (437.98,162.89) .. controls (437.98,165.1) and (436.19,166.89) .. (433.98,166.89) .. controls (431.77,166.89) and (429.98,165.1) .. (429.98,162.89) -- cycle ;
  \draw  [color={rgb, 255:red, 0; green, 0; blue, 0 }  ,draw opacity=1 ][fill={rgb, 255:red, 170; green, 170; blue, 170 }  ,fill opacity=1 ] (413.98,150.89) .. controls (413.98,148.68) and (415.77,146.89) .. (417.98,146.89) .. controls (420.19,146.89) and (421.98,148.68) .. (421.98,150.89) .. controls (421.98,153.1) and (420.19,154.89) .. (417.98,154.89) .. controls (415.77,154.89) and (413.98,153.1) .. (413.98,150.89) -- cycle ;
  \draw  [color={rgb, 255:red, 0; green, 0; blue, 0 }  ,draw opacity=1 ][fill={rgb, 255:red, 170; green, 170; blue, 170 }  ,fill opacity=1 ] (421.98,156.89) .. controls (421.98,154.68) and (423.77,152.89) .. (425.98,152.89) .. controls (428.19,152.89) and (429.98,154.68) .. (429.98,156.89) .. controls (429.98,159.1) and (428.19,160.89) .. (425.98,160.89) .. controls (423.77,160.89) and (421.98,159.1) .. (421.98,156.89) -- cycle ;
  \draw   (364.98,69.89) .. controls (364.98,63.27) and (370.35,57.89) .. (376.98,57.89) -- (376.98,57.89) .. controls (383.61,57.89) and (388.98,63.27) .. (388.98,69.89) -- (388.98,129.89) .. controls (388.98,136.52) and (383.61,141.89) .. (376.98,141.89) -- (376.98,141.89) .. controls (370.35,141.89) and (364.98,136.52) .. (364.98,129.89) -- cycle ;
  \draw   (371.18,119.51) .. controls (376.92,116.2) and (384.26,118.16) .. (387.57,123.9) -- (387.57,123.9) .. controls (390.88,129.64) and (388.92,136.98) .. (383.18,140.29) -- (331.22,170.29) .. controls (325.48,173.61) and (318.14,171.64) .. (314.82,165.9) -- (314.82,165.9) .. controls (311.51,160.16) and (313.48,152.82) .. (319.22,149.51) -- cycle ;
  \draw   (435.18,149.91) .. controls (440.92,153.22) and (442.88,160.56) .. (439.57,166.3) -- (439.57,166.3) .. controls (436.26,172.04) and (428.92,174.01) .. (423.18,170.69) -- (371.22,140.69) .. controls (365.48,137.38) and (363.51,130.04) .. (366.82,124.3) -- (366.82,124.3) .. controls (370.14,118.56) and (377.48,116.6) .. (383.22,119.91) -- cycle ;
  \draw   (308.98,160.89) .. controls (308.98,152.61) and (315.7,145.89) .. (323.98,145.89) -- (429.98,145.89) .. controls (438.27,145.89) and (444.98,152.61) .. (444.98,160.89) -- (444.98,160.89) .. controls (444.98,169.18) and (438.27,175.89) .. (429.98,175.89) -- (323.98,175.89) .. controls (315.7,175.89) and (308.98,169.18) .. (308.98,160.89) -- cycle ;
  \draw   (383.84,54.95) .. controls (390.38,58.73) and (392.62,67.09) .. (388.85,73.63) -- (334.52,167.73) .. controls (330.74,174.27) and (322.38,176.51) .. (315.84,172.73) -- (315.84,172.73) .. controls (309.3,168.95) and (307.06,160.59) .. (310.84,154.05) -- (365.16,59.96) .. controls (368.94,53.42) and (377.3,51.18) .. (383.84,54.95) -- cycle ;
  \draw   (438.75,172.95) .. controls (432,176.84) and (423.37,174.53) .. (419.48,167.78) -- (365.58,74.44) .. controls (361.69,67.69) and (364,59.06) .. (370.75,55.17) -- (370.75,55.17) .. controls (377.49,51.27) and (386.12,53.58) .. (390.01,60.33) -- (443.91,153.68) .. controls (447.8,160.42) and (445.49,169.05) .. (438.75,172.95) -- cycle ;
  
  \draw  [color={rgb, 255:red, 0; green, 0; blue, 0 }  ,draw opacity=1 ][fill={rgb, 255:red, 170; green, 170; blue, 170 }  ,fill opacity=1 ] (208.98,73.89) .. controls (208.98,71.68) and (210.77,69.89) .. (212.98,69.89) .. controls (215.19,69.89) and (216.98,71.68) .. (216.98,73.89) .. controls (216.98,76.1) and (215.19,77.89) .. (212.98,77.89) .. controls (210.77,77.89) and (208.98,76.1) .. (208.98,73.89) -- cycle ;
  \draw  [color={rgb, 255:red, 0; green, 0; blue, 0 }  ,draw opacity=1 ][fill={rgb, 255:red, 170; green, 170; blue, 170 }  ,fill opacity=1 ] (209,130) .. controls (209,127.79) and (210.79,126) .. (213,126) .. controls (215.21,126) and (217,127.79) .. (217,130) .. controls (217,132.21) and (215.21,134) .. (213,134) .. controls (210.79,134) and (209,132.21) .. (209,130) -- cycle ;
  \draw  [color={rgb, 255:red, 0; green, 0; blue, 0 }  ,draw opacity=1 ][fill={rgb, 255:red, 170; green, 170; blue, 170 }  ,fill opacity=1 ] (159.98,156.89) .. controls (159.98,154.68) and (161.77,152.89) .. (163.98,152.89) .. controls (166.19,152.89) and (167.98,154.68) .. (167.98,156.89) .. controls (167.98,159.1) and (166.19,160.89) .. (163.98,160.89) .. controls (161.77,160.89) and (159.98,159.1) .. (159.98,156.89) -- cycle ;
  \draw  [color={rgb, 255:red, 0; green, 0; blue, 0 }  ,draw opacity=1 ][fill={rgb, 255:red, 170; green, 170; blue, 170 }  ,fill opacity=1 ] (257.98,156.89) .. controls (257.98,154.68) and (259.77,152.89) .. (261.98,152.89) .. controls (264.19,152.89) and (265.98,154.68) .. (265.98,156.89) .. controls (265.98,159.1) and (264.19,160.89) .. (261.98,160.89) .. controls (259.77,160.89) and (257.98,159.1) .. (257.98,156.89) -- cycle ;
  \draw   (200.98,69.89) .. controls (200.98,63.27) and (206.35,57.89) .. (212.98,57.89) -- (212.98,57.89) .. controls (219.61,57.89) and (224.98,63.27) .. (224.98,69.89) -- (224.98,129.89) .. controls (224.98,136.52) and (219.61,141.89) .. (212.98,141.89) -- (212.98,141.89) .. controls (206.35,141.89) and (200.98,136.52) .. (200.98,129.89) -- cycle ;
  \draw   (207.18,119.51) .. controls (212.92,116.2) and (220.26,118.16) .. (223.57,123.9) -- (223.57,123.9) .. controls (226.88,129.64) and (224.92,136.98) .. (219.18,140.29) -- (167.22,170.29) .. controls (161.48,173.61) and (154.14,171.64) .. (150.82,165.9) -- (150.82,165.9) .. controls (147.51,160.16) and (149.48,152.82) .. (155.22,149.51) -- cycle ;
  \draw   (271.18,149.91) .. controls (276.92,153.22) and (278.88,160.56) .. (275.57,166.3) -- (275.57,166.3) .. controls (272.26,172.04) and (264.92,174.01) .. (259.18,170.69) -- (207.22,140.69) .. controls (201.48,137.38) and (199.51,130.04) .. (202.82,124.3) -- (202.82,124.3) .. controls (206.14,118.56) and (213.48,116.6) .. (219.22,119.91) -- cycle ;
  \draw   (144.98,160.89) .. controls (144.98,152.61) and (151.7,145.89) .. (159.98,145.89) -- (265.98,145.89) .. controls (274.27,145.89) and (280.98,152.61) .. (280.98,160.89) -- (280.98,160.89) .. controls (280.98,169.18) and (274.27,175.89) .. (265.98,175.89) -- (159.98,175.89) .. controls (151.7,175.89) and (144.98,169.18) .. (144.98,160.89) -- cycle ;
  \draw   (219.84,54.95) .. controls (226.38,58.73) and (228.62,67.09) .. (224.85,73.63) -- (170.52,167.73) .. controls (166.74,174.27) and (158.38,176.51) .. (151.84,172.73) -- (151.84,172.73) .. controls (145.3,168.95) and (143.06,160.59) .. (146.84,154.05) -- (201.16,59.96) .. controls (204.94,53.42) and (213.3,51.18) .. (219.84,54.95) -- cycle ;
  \draw   (274.75,172.95) .. controls (268,176.84) and (259.37,174.53) .. (255.48,167.78) -- (201.58,74.44) .. controls (197.69,67.69) and (200,59.06) .. (206.75,55.17) -- (206.75,55.17) .. controls (213.49,51.27) and (222.12,53.58) .. (226.01,60.33) -- (279.91,153.68) .. controls (283.8,160.42) and (281.49,169.05) .. (274.75,172.95) -- cycle ;
  
  \draw   (108,109) -- (132,109) -- (132,104) -- (148,114) -- (132,124) -- (132,119) -- (108,119) -- cycle ;
  
  \draw   (276,109) -- (300,109) -- (300,104) -- (316,114) -- (300,124) -- (300,119) -- (276,119) -- cycle ;
  
  \draw   (444,109) -- (468,109) -- (468,104) -- (484,114) -- (468,124) -- (468,119) -- (444,119) -- cycle ;

  \draw (116,86) node [anchor=north west][inner sep=0.75pt]   [align=left] {T1};
  \draw (284,86) node [anchor=north west][inner sep=0.75pt]   [align=left] {T2};
  \draw (452,86) node [anchor=north west][inner sep=0.75pt]   [align=left] {T3};

  \end{tikzpicture}
  
\end{center}

It is immediate to verify that $H_G$ is $k$-uniform, has overlap $b$ and maximum degree $\Delta$. 
Let $\+{I}(H_G)$ be the set of independent sets of $H_G$. 

\begin{lemma} \label{lem:count-equivalence}
  For any $\Delta$-regular graph $G=(V,E)$ and the constructed hypergraph $H_G$, it holds that $|\mathcal{I}(H_G)|=2^{|E|(k-2b)}Z(G)$. 
\end{lemma}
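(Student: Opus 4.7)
The plan is to count $|\mathcal{I}(H_G)|$ by fibring the independent sets of $H_G$ over configurations on $G$. Specifically, define a map $\Phi$ from independent sets of $H_G$ to configurations $\sigma: V \to \{0,1\}$ by setting $\Phi(I)(v) = 1$ if all $b$ copies of $v$ produced in step (T2) lie in $I$, and $\Phi(I)(v) = 0$ otherwise. Then fix any $\sigma$ and count $|\Phi^{-1}(\sigma)|$; summing over $\sigma$ should yield precisely $2^{|E|(k-2b)} Z(G)$ once the parameters $\beta = 1$, $\gamma = 1 - 2^{-(k-2b)}$ and $\lambda = 2^{b}-1$ are substituted into $\wt(\sigma)$.

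To compute $|\Phi^{-1}(\sigma)|$, I would split the choices into independent local decisions. For each vertex $v \in V$, the $b$ copies contribute $1$ choice if $\sigma(v) = 1$ (they are all forced in) and $2^{b}-1 = \lambda$ choices if $\sigma(v) = 0$ (any proper subset of the $b$ copies). For each edge $e = \{u,v\} \in E$, the $k-2b$ private vertices introduced by (T3) are unconstrained unless the hyperedge of $H_G$ corresponding to $e$ already has its other $2b$ slots occupied by $I$; this happens precisely when $\sigma(u) = \sigma(v) = 1$, in which case the all-in assignment on the private vertices is forbidden, leaving $2^{k-2b}-1$ choices, and otherwise all $2^{k-2b}$ assignments are allowed. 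These counts factorise cleanly because different hyperedges of $H_G$ share only vertices from (T2), never private ones from (T3).

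Multiplying the local contributions gives
\[
|\Phi^{-1}(\sigma)| = (2^{b}-1)^{n_0(\sigma)} \cdot (2^{k-2b}-1)^{m_{11}(\sigma)} \cdot (2^{k-2b})^{|E|-m_{11}(\sigma)},
\]
and pulling out the factor $(2^{k-2b})^{|E|}$ rewrites this as $2^{|E|(k-2b)} \cdot \lambda^{n_0(\sigma)} \beta^{m_{00}(\sigma)} \gamma^{m_{11}(\sigma)} = 2^{|E|(k-2b)} \wt(\sigma)$, using $\beta = 1$ to absorb the $m_{00}$ exponent freely. Summing over all $\sigma$ and recognising $\sum_\sigma \wt(\sigma) = Z(G)$ completes the argument.

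The only delicate point is the case analysis for the hyperedge constraint: one must verify that the constraint on the private vertices bites exactly in the $(\sigma(u),\sigma(v)) = (1,1)$ case, and that no other interaction between the choices at different edges or vertices arises. This is ultimately because overlaps between hyperedges of $H_G$ occur only at copies from (T2), which are already pinned by $\sigma$ before the per-edge counting begins, so the private-vertex counts at distinct hyperedges are genuinely independent.
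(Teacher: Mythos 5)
Your proposal is correct and follows essentially the same route as the paper: you define a map $\Phi$ from independent sets of $H_G$ to configurations $\sigma$ on $G$, which is exactly the paper's partition $\mathcal{I}(H_G)=\biguplus_{\sigma}\mathcal{S}(\sigma)$ with $\mathcal{S}(\sigma)=\Phi^{-1}(\sigma)$, and the per-vertex and per-edge counting ($2^b-1$ vs.\ $1$ for (T2) vertices, $2^{k-2b}-1$ vs.\ $2^{k-2b}$ for (T3) vertices) matches the paper's verbatim. Your closing remark about why the local counts factorise---overlaps of hyperedges occur only among (T2) copies, which are pinned by $\sigma$---is a helpful observation that the paper leaves implicit.
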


\begin{proof}
We define the following partition over all the independent sets $\mathcal{I}(H_G)
=\biguplus_{\sigma}\mathcal{S}(\sigma)$ in the hypergraph $H_G$, where $\sigma$ ranges over all configurations of the $2$-spin system.
For any vertex $v\in V$, let $B_v$ be the set of constructed vertices in $H_G$ corresponding to $v$ as in step (T2) of the construction.
Given an independent set $I\in\mathcal{I}(H_G)$, the part $\mathcal{S}(\sigma)$ that $I$ falls into is given by, for any $v\in V$,
\begin{itemize}
  \item $\sigma(v)=0$, if $|B_v\cap I|\leq b-1$; 
  \item $\sigma(v)=1$, if $|B_v\cap I|=b$ (namely, $B_v\subseteq I$). 
\end{itemize}
This is a partition because each $I\in\mathcal{I}(H_G)$ falls into exactly one part. 
Next, we demonstrate that $|S(\sigma)|=2^{|E|(k-2b)}\wt(\sigma)$, which can be immediately used to establish the lemma.  In subsequent discussions, a partial configuration of a subset $S\subseteq V(H_G)$ refers to a subset of vertices $S'\subseteq S$ that is included in the (weak) independent set of $H_G$. A partial configuration $S'$ of $S$ is considered feasible if there exists an independent set $I\in\mathcal{I}(H_G)$ such that $I\cap S=S'$.



\begin{itemize}
  \item Consider the vertices constructed in (T2). 
  \begin{itemize}
    \item For each $v\in V$ such that $\sigma(v)=0$, there are $2^b-1$ feasible partial configurations of $B_v$. 
    \item For each $v\in V$ such that $\sigma(v)=1$, there is just one feasible partial configuration of $B_v$. 
  \end{itemize}
  \item Consider the vertices constructed in (T3). 
  \begin{itemize}
    \item For each edge $e$ such that both its endpoints take spin $1$, the rest $k-2b$ vertices of the corresponding hyperedge cannot be in an independent set together, so there are $2^{k-2b}-1$ feasible partial configurations. 
    \item For any other edge, the corresponding $k-2b$ vertices are free to be included in an independent set, so there are $2^{k-2b}$ feasible partial configurations. 
  \end{itemize}
\end{itemize}
In all, this gives
\[
  |S(\sigma)|=\left(2^{k-2b}-1\right)^{m_{11}(\sigma)}\left(2^{k-2b}\right)^{|E|-m_{11}(\sigma)}\left(2^b-1\right)^{n_0(\sigma)}=2^{|E|(k-2b)}\wt(\sigma). \qedhere
\]
\end{proof}

Our goal then boils down to showing the inapproximability of the constructed $2$-spin system. 
To establish this, we invoke the following celebrated result by Sly and Sun \cite{SS14}, which connects the so-called non-uniqueness property of any general anti-ferromagnetic $2$-spin system with computational hardness. 
Denote by $\mathbb{T}_{\Delta}$ the infinite $\Delta$-regular tree, and by $\hat{\mathbb{T}}_{\Delta}$ the infinite $(\Delta-1)$-ary tree. 
\begin{theorem}[\cite{SS14}] \label{lem:ss14}
  For any nondegenerate homogeneous anti-ferromagnetic $2$-spin system specified by the interaction matrix $\bm{B}$ and the external field $\bm{h}$ on $\Delta$-regular graphs that lies in the $\mathbb{T}_{\Delta}$ non-uniqueness region, the partition function is $\NP$-hard to approximate, even within a factor of $2^{cn}$ for some constant $c>0$ depending on $\Delta$ and the spin system. 
\end{theorem}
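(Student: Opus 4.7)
The plan is to follow the strategy pioneered by Sly and culminated by Sly and Sun: establish the hardness via a gadget reduction from a known NP-hard problem (such as approximate \MaxCut{} on $(\Delta-1)$-regular bipartite graphs) to approximating the partition function, using random bipartite $(\Delta-1)$-regular graphs as the workhorse gadget. Concretely, I would first analyse the tree recursion
\[
R(x_1,\dots,x_{\Delta-1})=\lambda\prod_{i=1}^{\Delta-1}\frac{\beta x_i+1}{x_i+\gamma}
\]
on $\hat{\mathbb{T}}_{\Delta}$. Non-uniqueness on $\mathbb{T}_{\Delta}$ means the $(\Delta-1)$-step recursion has two attracting fixed points $x_+,x_-$ corresponding to the ``majority $0$'' and ``majority $1$'' phases, together with an unstable fixed point in between. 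These fixed points determine two candidate per-vertex free energies $\Phi_+, \Phi_-$ obtained by plugging into a Bethe prediction.

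Next I would set up the random bipartite gadget. Let $G_n$ be a uniformly random bipartite $(\Delta-1)$-regular graph on $n+n$ vertices. The key technical step is to prove, via the second moment method combined with small-subgraph conditioning, that for every phase label $\tau\in\{+,-\}$ the ``phase-restricted'' partition function $Z_\tau(G_n)$ (summing only over configurations whose imbalance on each side lies in a neighbourhood of the $\tau$-phase magnetisation) concentrates tightly around $e^{n\Phi_\tau}(1+o(1))$. The two partition functions $Z_+(G_n)$ and $Z_-(G_n)$ are then exponentially separated, say $e^{n(\Phi_+-\Phi_-)}$ apart. This is the main obstacle: the second moment computation reduces to a two-dimensional variational problem over joint magnetisations of a pair of configurations, and one has to verify that its optimum is attained precisely at the product of the single-replica magnetisations—otherwise the second moment blows up and concentration fails. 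One checks this by a careful Hessian analysis along the Nishimori-like lines of Sly–Sun, using the antiferromagnetic sign condition $\beta\gamma<1$ and nondegeneracy of $\BB$.

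Once concentration is in hand, I would perform the gadget reduction. Given a $(\Delta-1)$-regular bipartite graph $H$ encoding an instance of approximate \MaxCut{} (whose NP-hardness on such graphs is standard), replace each vertex of $H$ by a ``bouquet'' copy of the random gadget, wiring external half-edges along the edges of $H$. Each vertex of $H$ is then effectively assigned a phase $\tau_v\in\{+,-\}$, and the global partition function decomposes, up to $(1+o(1))$ multiplicative errors coming from the concentration above, as a sum over phase assignments $\tau$ of a weight that is exponential in the cut value of $\tau$ in $H$ (here one uses that edges between opposite phases contribute a different local factor than edges between equal phases, a gap one makes quantitative from $\Phi_+\neq \Phi_-$ and the antiferromagnetic coupling). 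Hence a $2^{cn}$-factor approximation to the overall partition function would, after taking logarithms, approximate the maximum cut of $H$ to within an additive $o(n)$, contradicting NP-hardness of \MaxCut{} on bounded-degree bipartite graphs (up to $\mathbf{NP}=\mathbf{RP}$ after derandomising the random gadgets).

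Finally, two housekeeping points. The random gadget must be derandomised: I would quantise the second-moment bound to produce, with probability $1-o(1)$, a deterministic graph achieving the stated concentration, and argue that this suffices for a Karp reduction via an averaging/amplification argument. The constant $c(\BB,\Delta)$ is exactly the phase-gap $|\Phi_+-\Phi_-|$ times a combinatorial factor depending on the gadget size. The bulk of the work, and the only genuinely subtle step, is the second-moment variational analysis establishing phase concentration; the rest is combinatorial bookkeeping.
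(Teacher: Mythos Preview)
The paper does not prove this statement at all; it is quoted verbatim as a black-box result from Sly and Sun \cite{SS14} and then invoked in the proof of \Cref{thm:main}. So there is no ``paper's own proof'' to compare against---your sketch is really an attempt to reconstruct the Sly--Sun argument itself.

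Your outline has the right large-scale architecture (random bipartite regular gadgets, second-moment phase concentration, reduction from \MaxCut{}), but it contains a fatal slip that would make the reduction vacuous. You propose to reduce from approximate \MaxCut{} on \emph{bipartite} $(\Delta-1)$-regular graphs $H$, and later appeal to ``\NP-hardness of \MaxCut{} on bounded-degree bipartite graphs''. But \MaxCut{} on bipartite graphs is trivially in $\P$: the bipartition itself cuts every edge, so the optimum is $|E(H)|$. There is no hardness to import. In the actual Sly--Sun construction the instance graph $H$ is a general (non-bipartite) bounded-degree graph---typically $3$-regular---for which inapproximability of \MaxCut{} is a classical \APX-hardness result. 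It is only the \emph{gadgets} glued at the vertices of $H$ that are random bipartite $\Delta$-regular graphs; the bipartiteness there is what makes the two phases $Q^+,Q^-$ live on opposite sides and hence be distinguishable by the cut structure of $H$.

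A second, smaller point: your parenthetical ``up to $\NP=\RP$ after derandomising the random gadgets'' undersells what is actually achieved. Because the gadget parameters (and in particular the size needed for concentration) depend only on $(\BB,\Delta)$ and the approximation gap, one shows that a \emph{fixed} good gadget exists with high probability; its existence is then certified once and for all, and the reduction is a genuine deterministic Karp reduction. This is why the theorem is stated as unconditional \NP-hardness rather than hardness modulo $\NP=\RP$.
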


We remark that uniqueness/non-uniqueness regions for $\mathbb{T}_{\Delta}$ coincide with those for $\hat{\mathbb{T}}_{\Delta}$. 
However, $(\Delta-1)$-ary trees are more convenient to handle, so we move to $\hat{\mathbb{T}}_{\Delta}$ onwards. 
It is known that the $\hat{\mathbb{T}}_{\Delta}$ (non-)uniqueness region can be characterised by the solutions of the standard tree recursion formula on the ratio of the marginal probability of the root induced by the Gibbs measure.
The following lemma, originally due to Martinelli, Sinclair and Weitz \cite[Section 6.2]{martinelli2007fast}, characterises these solutions. 
\begin{lemma}[{\cite[Lemma 7]{GSV16}}] \label{lem:msw}
For $\Delta\geq 3$ and an anti-ferromagnetic $2$-spin system specified by (\ref{equ:2spin-interaction}),
consider the system of equations
\begin{equation}\label{equ:one-step}
  x=R(y), \qquad 
  y=R(x), \qquad
  \text{ where } R(z):=\lambda\left(\frac{\beta z+1}{z+\gamma}\right)^{\Delta-1}
  \text{ and }x,y\geq 0.
\end{equation}
Then, 
\begin{itemize} 
  \item in the $\hat{\mathbb{T}}_{\Delta}$ uniqueness region, the system has a unique solution $(Q^\times,Q^\times)$; 
  \item in the $\hat{\mathbb{T}}_{\Delta}$ non-uniqueness region, the system has three solutions $(Q^+,Q^-), (Q^\times,Q^\times), (Q^-,Q^+)$ where $Q^+>Q^\times>Q^-$. 
\end{itemize}
\end{lemma}

$R(z)$ in \eqref{equ:one-step} is called the \emph{one-step recursion}. 
For any pair of solutions $(x,y)$ to \eqref{equ:one-step}, we have that $R(R(x))=x$ and $R(R(y))=y$. Therefore, both $x$ and $y$ are fixed points of the \emph{two-step recursion} $R(R(z))$. 
Let $d:=\Delta-1$. 
Using the above lemma, it suffices to show that the two-step recursion has three fixed points $Q^+>Q^\times>Q^-$ in order to establish non-uniqueness. 
Equivalently, in the regime of parameters specified in \Cref{thm:main}, we need to show that the following function, obtained by plugging the parameters specified in~\eqref{eqn:parameters-specification} into the two-step recursion, has three distinct zeros in $(0,+\infty)$: 
\begin{equation}
  f(z):=(2^b-1)\left(1+\frac{1}{2^{k-2b}(2^b-1)\left(1+\frac{1}{2^{k-2b}z+2^{k-2b}-1}\right)^d+2^{k-2b}-1}\right)^{d}-z. 
\end{equation}
Since the solution $Q^\times$ is the unique fixed point of the one-step recursion, that is, $Q^\times=R(Q^\times)$, it is also helpful to consider the function
\begin{equation}
  g(z):=(2^b-1)\left(1+\frac{1}{2^{k-2b}z+2^{k-2b}-1}\right)^d-z
\end{equation} obtained by plugging the parameters specified in~\eqref{eqn:parameters-specification} into the one-step recursion.
The following lemma is sufficient to derive our main theorem.  
\begin{lemma} \label{lem:analytic}
  Assume integers $k\geq 2$, $1\leq b\leq k/2$ and $d=5\cdot 2^{k-b}$. 
  Define $z^*:=d/2^{k-2b}=5\cdot 2^b$. 
  Then $f(z^*)>0$ and $g(z^*)<0$. 
\end{lemma}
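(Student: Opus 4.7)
The plan is to set $h(z) := (2^b - 1)\bigl(1 + 1/(2^{k-2b} z + 2^{k-2b} - 1)\bigr)^d$; this is monotonically decreasing on $[0, \infty)$, and inspecting the given formulas gives $g(z) = h(z) - z$ and $f(z) = h(h(z)) - z$. I will first upper-bound the inner evaluation $y := h(z^*)$, which will immediately yield $g(z^*) < 0$; the same upper bound, combined with the monotonicity of $h$, will then reduce $f(z^*) > 0$ to a clean one-variable inequality.

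Substituting $z^* = 5 \cdot 2^b$ gives $2^{k-2b} z^* + 2^{k-2b} - 1 = d + 2^{k-2b} - 1 =: n$, with $n \geq d$ since $k \geq 2b$. The classical bound $(1 + 1/n)^n < e$ yields $(1 + 1/n)^d < e$, whence $g(z^*) < (2^b - 1)e - 5 \cdot 2^b = 2^b(e - 5) - e < 0$ because $e < 5$ and $b \geq 1$. The same computation gives $y \leq (2^b - 1)e$, and by monotonicity of $h$, $h(y) \geq h\bigl((2^b - 1)e\bigr)$. Applying the elementary bound $\ln(1 + 1/D) \geq 1/(D + 1)$ to the innermost base yields $(1 + 1/D)^d \geq \exp\bigl(d/(D + 1)\bigr)$ with $D + 1 = 2^{k-2b}\bigl((2^b - 1) e + 1\bigr)$. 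The exponent simplifies to $5 \cdot 2^b / ((2^b - 1) e + 1)$, so $f(z^*) > 0$ reduces to showing
\[
(s - 1) \exp\!\left(\frac{5s}{(s - 1) e + 1}\right) > 5s, \qquad s := 2^b \geq 2.
\]

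I expect the last one-variable inequality to be the main obstacle. Taking logs, it becomes $F(s) := 5s/((s-1)e + 1) - \ln\bigl(5s/(s-1)\bigr) > 0$. Using the algebraic identity $5s/((s-1)e + 1) = 5/e + \bigl(5(e-1)/e\bigr)/\bigl((s-1)e + 1\bigr)$, I rewrite
\[
F(s) = \left(\frac{5}{e} - \ln 5\right) + \frac{5(e-1)/e}{(s - 1) e + 1} - \ln\!\left(1 + \frac{1}{s - 1}\right).
\]
The leading constant is positive ($5/e - \ln 5 > 0.22$). Using $\ln(1 + x) \leq x$ to bound the last term by $1/(s-1)$, the numerical fact $5(e-1) > e^2$ (equivalently, $e$ lies strictly between the roots of $x^2 - 5x + 5$) ensures that for $s \geq 4$ the middle summand exceeds $1/(s-1)$, so $F(s) > 5/e - \ln 5 > 0$. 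Only the boundary case $s = 2$ (that is, $b = 1$) remains, and it is settled by direct computation: $F(2) = 10/(e + 1) - \ln 10 > 0.38$. The subtle point is that the naive estimate $5/e - \ln 10$ is negative, so the middle correction term cannot be discarded; retaining it through the decomposition above is what makes the final step work.
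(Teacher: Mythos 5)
Your proof is correct, and it follows the same basic strategy as the paper's — write $g = h - z$ and $f = h\circ h - z$ for the one-step recursion $h$, use $(1+1/n)^n < e$ to upper-bound $y = h(z^*)$ (giving $g(z^*) < 0$ at once), then use monotonicity of $h$ plus the inequality $\ln(1+1/D) \ge 1/(D+1)$ (equivalent to the paper's $(1+x/y)^y > \exp(xy/(x+y))$) to lower-bound $h(h(z^*))$. The differences are in execution and they favor your version. First, the paper loosens $2^{k-2b}-1$ to $2^{k-2b}$ before applying the exponential bound, so its exponent $\frac{5\cdot 2^{k+b}}{2^k+2^{2b}+2^k(2^b-1)e}$ retains a $k$-dependence and must be argued monotone in $k$; worse, at $(k,b)=(2,1)$ the resulting estimate is too weak, forcing the paper to state ``assume $k\ge 3$'' and then handle $k=2$ by a separate numerical check. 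Your slightly tighter handling gives the $k$-free exponent $5\cdot 2^b/((2^b-1)e+1)$, which collapses everything to a single one-variable inequality in $s=2^b\ge 2$ that in particular covers $k=2$ for free. Second, for that one-variable inequality the paper does a case split $b=2$ (numerical), $b\ge 3$ (crude bound), $b=1$ (yet another numerical check), whereas you isolate the positive constant $5/e-\ln 5$ via an exact algebraic decomposition of $F(s)$ and then only need $s=2$ numerically plus a comparison of the remaining two terms for $s\ge 4$. One small stylistic point: the phrase that $5(e-1)>e^2$ ``ensures'' the middle term exceeds $1/(s-1)$ for $s\ge 4$ compresses a short quantitative check — explicitly, it reduces to $(s-1)(5(e-1)-e^2)\ge e$, which at $s=4$ is the verifiable inequality $3e^2-14e+15\le 0$ (true since $5/3<e<3$) — but that is a one-line gap, comparable in rigor to the paper's own unspelled assertions like $h(2)>1.5$.
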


\begin{proof}[Proof of \Cref{thm:main}]
Note that $g(0)>0$ and $\lim_{z\to +\infty} g(z)=-\infty$. 
By $g(z^*)<0$, we know that the unique zero of $g$, which is $Q^\times$, is smaller than $z^*$. 
On the other hand, by $f(z^*)>0$ and $\lim_{z\to +\infty} f(z)=-\infty$, there is a zero of $f$ on $(z^*,+\infty)$, and it cannot be $Q^\times$. 
This establishes non-uniqueness due to \Cref{lem:msw}. 
$\NP$-hardness then follows after \Cref{lem:ss14}. 
\end{proof}

In the proof of \Cref{lem:analytic}, the following standard inequality is useful. 
\begin{equation} \label{equ:exp}
  \exp\{x\}>\left(1+\frac{x}{y}\right)^y>\exp\left\{\frac{xy}{x+y}\right\}\qquad\text{for all }x,y>0. 
\end{equation}

\begin{proof}[Proof of \Cref{lem:analytic}]
The $g(z^*)$ part is due to a straightforward estimation:
\begin{align*}
g(z^*)
\leq (2^b-1)\left(1+\frac{1}{5\cdot 2^{k-b}}\right)^{5\cdot 2^{k-b}}-5\cdot 2^b < (2^b-1)\mathrm{e}-5\cdot 2^b < 0. 
\end{align*}

When $k=2$, $b$ can only take $1$, and a direct calculation gives $f(z^*)>16.0$ in this case. For $k\geq 3$, we have
\begin{align*}
f(z^*)=&(2^b-1)\left(1+\frac{1}{2^{k-2b}(2^b-1)\left(1+\frac{1}{d+2^{k-2b}-1}\right)^{d}+2^{k-2b}-1}\right)^{d}-5\cdot 2^b\\
\ge &(2^b-1)\left(1+\frac{1}{2^{k-2b}(2^b-1)\left(1+\frac{1}{d}\right)^{d}+2^{k-2b}-1}\right)^{d}-5\cdot 2^b\\
> &(2^b-1)\left(1+\frac{1}{2^{k-2b}(2^b-1)\mathrm{e}+2^{k-2b}-1}\right)^{d}-5\cdot 2^b\\
> &(2^b-1)\left(1+\frac{1}{2^{k-2b}(2^b-1)\mathrm{e}+2^{k-2b}}\right)^{d}-5\cdot 2^b\\
>&(2^b-1)\exp\left\{\frac{5\cdot 2^{b+k}}{2^k+2^{2b}+2^{k}(2^b-1)\mathrm{e}}\right\}-5\cdot 2^b. \tag{By (\ref{equ:exp}) and $d=5\cdot 2^{k-b}$}
\end{align*}

Note that the fraction in $\exp\{\cdot\}$ is monotone increasing with respect to $k$. We can further analyse this by considering two cases.
\begin{itemize}
  \item Fix $b\geq 2$. The whole term is minimised at $k=2b$.
  Plugging this in, we further get
  \[
  f(z^*)>(2^b-1)\exp\left\{\frac{5\cdot 2^b}{2+(2^b-1)\mathrm{e}}\right\}-5\cdot 2^b =: h(b).
  \]
  Now it suffices to show $h(b)>0$ for $b\geq 2$. 
  If $b=2$, we have $h(2)>1.5$.
  If $b\geq 3$, then
  \[
  h(b)> (2^b-1)\exp\left\{\frac{5\cdot 2^b}{\mathrm{e}\cdot 2^b}\right\}-5\cdot 2^b>1.29\cdot 2^b-6.29>0.
  \]
  \item Fix $b=1$. The whole term is minimised at $k=3$ (we assume $k\geq 3$), and the value is at least $0.7$. 
\end{itemize}

Combining all these facts, the proof is complete.
\end{proof}

\section*{Acknowledgement}

Jiaheng Wang has received funding from the European Research Council (ERC) under the European Union's Horizon 2020 research and innovation programme (grant agreement No. 947778), and an Informatics Global PhD Scholarship at The University of Edinburgh.





\bibliographystyle{alpha} 
\bibliography{linear-indset-hardness}

\end{document}